\newtheorem{theorem}{Theorem}
\newtheorem{lemma}[theorem]{Lemma}
\title{Colourings of Oriented Connected Cubic Graphs}
\begin{document}

\maketitle
\author{Christopher Duffy\footnote{Research supported by the Natural Science and Research Council of Canada}}\\
\address{Department of Mathematics and Statistics, University of Saskatchewan, CANADA}
\begin{abstract}
In this  note we show  every orientation of a connected  cubic graph admits an oriented $8$-colouring. This lowers the best-known upper bound for the chromatic number of the family of orientations of connected cubic graphs.
We further show that every such oriented graph admits a $2$-dipath $7$-colouring. 
These results imply that either the oriented chromatic number for the family of orientations of connected cubic graphs equals the $2$-dipath chromatic number
or the long-standing conjecture of Sopena [\emph{Journal of Graph Theory 25:191-205 1997}] regarding the chromatic number of orientations of connected cubic graphs is false.
\end{abstract}

\section{Introduction and Preliminary Notions}

An \emph{oriented graph} is a simple graph equipped with an orientation of its edges as arcs.
Equivalently, an oriented graph is an antisymmetric loopless digraph.
We call an oriented graph $G$ \emph{properly subcubic} when its underlying simple graph, denoted $U(G)$, has maximum degree three and a vertex of degree at most two.
The \emph{degree} of a vertex in $G$ is its degree in $U(G)$.
We call an oriented graph $G$ \emph{connected} when $U(G)$ is connected.
For $uv, vw \in A(G)$, we say $uvw$ is a \emph{$2$-dipath}; $v$ is \emph{between} $u$ and $w$; and $v$ is the \emph{centre} of the $2$-dipath $uvw$.
For graph theoretic notation and terminology not defined herein, we refer the reader to \cite{bondy2008graph}.

Let $G$ and $H$ be oriented graphs.
There is a \emph{homomorphism of $G$ to $H$} when there exists $\phi:V(G) \to V(H)$ so that $uv \in A(G)$ implies $\phi(u)\phi(v) \in A(H)$.
We call $\phi$ a \emph{homomorphism} or an \emph{oriented $|V(H)|$-colouring of $G$}. 
The  \emph{oriented chromatic number} of an oriented graph $G$, denoted $\chi_o(G)$, the least integer $k$ such that there is a homomorphism of $G$ to an oriented graph with $k$ vertices.
That is, $\chi_o(G)$ is the least integer $k$ so that there exists  an oriented $k$-colouring of $G$.
For $\mathcal{F}$, a  family of oriented graphs,  we define $\chi_o(\mathcal{F})$  to be the least integer $k$ such that $\chi_o(F) \leq k$ for all $F \in \mathcal{F}$.

Equivalently one may define an \emph{oriented $k$-colouring} as a labelling $c:V(G) \to \{0,1,\dots,~k~-~1\}$ such that:
\begin{enumerate}
	\item $c(u) \neq c(v)$ for all $uv \in A(G)$; and
	\item if $uv, xy \in A(G)$ and $c(u) = c(y)$, then $c(v) \neq c(x)$.
\end{enumerate}

Such a labeling implicitly defines a homomorphism to an oriented graph $H$ with vertex set $\{0,1,\dots, k-1\}$ where $ij \in A(H)$ when there is an arc $uv \in A(G)$ such that $\phi(u)=i$ and $\phi(v) = j$.
Condition (1) ensures $H$ is loopless and condition (2) ensures $H$ is antisymmetric. 
 In the case $v = x$, condition (2)  implies that in an oriented colouring any three vertices forming a directed path (i.e., a $2$-dipath) receive distinct colours.
As such one may bound the oriented chromatic number of an oriented graph by considering colourings that assign distinct colours to vertices at directed distance at most two.

Let $G$ be an oriented graph.  A \emph{2-dipath $t$-colouring} of $G$ is a labelling $c:V(G) \to \{0,1,\dots, t-1\}$ such that:
\begin{enumerate}
	\item $c(u) \neq c(v)$ for all $uv \in A(G)$; and
	\item if $uv, vw \in A(G)$, then $c(u) \neq c(w)$.
\end{enumerate}

The \emph{$2$-dipath chromatic number} of an oriented graph $G$, denoted $\chi_{2d}(G)$, is the least integer $t$ such that $G$ admits a $2$-dipath $t$-colouring.
For $\mathcal{F}$, a family of oriented graphs, $\chi_{2d}(\mathcal{F})$ is defined to be the least integer $t$ such that $\chi_{2d}(F) \leq t$ for all $F \in \mathcal{F}$. Since every oriented colouring is $2$-dipath colouring it follows that for any oriented graph $H$ we have $\chi_o(H) \geq \chi_{2d}(H)$.

Consider an oriented graph $G$ for which every pair of vertices is either adjacent or the ends of a  $2$-dipath.
The definitions of oriented colouring and $2$-dipath coloring imply $\chi_o(G) = \chi_{2d}(G) = |V(G)|$.
We call such oriented graphs  \emph{oriented cliques}.

The notion of $2$-dipath colourings was  introduced in  \cite{MAYO12,KMY09}.
We refer the reader to \cite{S16} for a comprehensive survey on homomorphism and colourings of oriented graphs.

Let $\mathcal{F}_3$ be the set of orientations of cubic graphs.
Let $\mathcal{F}^C_3$ be the set of orientations of connected cubic graphs.
In \cite{D19} the authors show $\chi_o(\mathcal{F}^c_3) \leq 9$.
They further show that this bound may be improved to $8$ when restricted to those  oriented connected cubic graphs that have a source or a sink.
Central to these results are homomorphisms to a particular class of Cayley digraphs, namely Paley tournaments.
Let $q$ be a prime power congruent to $3$ modulo $4$. The \emph{Paley tournament on $q$ vertices}, denoted $QR_q$, is the tournament with vertex set $\{0,1,2,\dots, q-1\}$, where $uv \in A(G)$ when $v-u$ is a non-zero quadratic residue modulo $q$.

\begin{theorem}\label{thm:subCubic}\cite{D19}
	If $G$ is a connected properly subcubic oriented graph with no degree $3$ source vertex adjacent to a degree $3$ sink vertex, then $G \to QR_7$.
\end{theorem}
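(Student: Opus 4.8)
The plan is to prove the statement by strong induction on $|V(G)|$, repeatedly deleting a vertex of degree at most two and extending a homomorphism supplied by the inductive hypothesis. To keep the induction self-contained I would prove the marginally stronger claim in which $G$ is allowed to be disconnected provided every component contains a vertex of degree at most two; a homomorphism to $QR_7$ exists if and only if one exists on each component, and this ``low-degree vertex in every component'' property is exactly what the reductions below preserve. Deleting a vertex $v$ removes arcs and can only lower degrees, so every vertex that remains of degree three was non-adjacent to $v$ and keeps all of its incident arcs; consequently any degree-$3$ source adjacent to a degree-$3$ sink in $G-v$ is already present in $G$, and the forbidden configuration is never created. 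Moreover, as $G$ is connected, every component of $G-v$ meets a neighbour of $v$ whose degree has just dropped to at most two, so each component still has a low-degree vertex. The base case $|V(G)|=1$ is trivial, and in general the work reduces to colouring the remainder by induction and then choosing a colour for the deleted vertex.

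The concrete engine of the argument is a short computation in $QR_7$. Writing $R=\{1,2,4\}$ for the quadratic residues modulo $7$, the out-neighbours of a vertex $i$ are $i+R$ and its in-neighbours are $i-R$, and $-R=\{3,5,6\}$ is precisely the set of non-residues. Two facts result: $R-R=\mathbb{Z}_7$, so any two vertices of $QR_7$ have a common out-neighbour and a common in-neighbour; and $R+R=\mathbb{Z}_7\setminus\{0\}$, the value $0$ being excluded because $R$ and $-R$ are disjoint, while every nonzero element of $\mathbb{Z}_7$ is realised as a sum of two residues. Interpreted at the deleted vertex $v$, these say: if $v$ has at most one coloured neighbour, or two coloured neighbours that are both in-neighbours or both out-neighbours of $v$, then a valid colour for $v$ always exists; and if $v$ has exactly one in-neighbour $a$ and one out-neighbour $b$, a valid colour exists if and only if $c(a)\ne c(b)$, since then we need a vertex $z$ with $c(a)\to z\to c(b)$.

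Thus the only reduction that can fail is the deletion of a degree-two vertex $v$ with in-neighbour $a$ and out-neighbour $b$ for which the inductive colouring happens to satisfy $c(a)=c(b)$. If $a$ and $b$ are adjacent this cannot occur, since adjacency forces distinct colours, so I may assume $a$ and $b$ are non-adjacent. The natural remedy is to recolour one of $a,b$ so as to separate their colours while preserving the homomorphism on $G-v$. A case analysis of the same residue-theoretic kind shows that the list of colours available to $a$, given its two remaining neighbours, is a singleton only in rigid configurations and otherwise has size at least two; whenever this list has size at least two it contains a colour different from $c(b)$, and recolouring $a$ accordingly completes the extension.

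The main obstacle is the rigid case in which both $a$ and $b$ have singleton colour lists, so that neither can be recoloured locally; this is forced only by very restricted neighbourhoods and corresponds to $G$ being close to a cubic graph with a single subdivided arc, where replacing $a\to v\to b$ by an arc would return a genuinely cubic graph lying outside the inductive class. Handling it requires either propagating the recolouring along a chain of such rigid vertices or analysing the near-cubic structure directly, and it is precisely here that the hypothesis excluding a degree-$3$ source adjacent to a degree-$3$ sink becomes indispensable: that configuration is the local obstruction which can block the final colour choice, and ruling it out is what lets the chain terminate successfully. I expect the bulk of the technical effort, and the genuine difficulty of the theorem, to lie in this rigid case; the remaining reductions are routine given the extension facts above.
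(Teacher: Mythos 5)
This statement is not proved in the paper at all: it is quoted from \cite{D19} as Theorem \ref{thm:subCubic}, so there is no internal proof to compare against, and your proposal must stand on its own. Its preparatory material is sound: the strengthening to graphs in which every component has a vertex of degree at most two, the observation that deleting a vertex never creates a new degree-$3$ source adjacent to a degree-$3$ sink (any degree-$3$ vertex of $G-v$ is non-adjacent to $v$ and keeps all its arcs), and the arithmetic in $QR_7$ (with $R=\{1,2,4\}$, $R-R=\mathbb{Z}_7$ and $R+R=\mathbb{Z}_7\setminus\{0\}$, so a degree-two vertex with one coloured in-neighbour $a$ and one coloured out-neighbour $b$ can be coloured if and only if $c(a)\neq c(b)$) are all correct.

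The genuine gap is that the case carrying the entire difficulty of the theorem is deferred rather than proved. You say the colour list of $a$ is ``a singleton only in rigid configurations,'' but singletons are in fact generic, not exceptional: $QR_7$ is doubly regular, so two \emph{distinct} colours have exactly one common in-neighbour and exactly one common out-neighbour, and an arc $c(w_1)\to c(w_2)$ admits exactly one $z$ with $c(w_1)\to z\to c(w_2)$. Hence whenever $a$'s two remaining neighbours are both out-neighbours (or both in-neighbours) carrying distinct colours, or are mixed with colours forming an arc in the forward direction, $a$ is frozen. Your remedy --- ``propagating the recolouring along a chain of such rigid vertices or analysing the near-cubic structure directly'' --- is exactly the theorem, and no propagation rule, termination argument, or invariant is supplied; likewise the hypothesis excluding a degree-$3$ source adjacent to a degree-$3$ sink is asserted to be ``indispensable'' at this point but is never actually used in any argument. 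What you have is a correct reduction of the theorem to its hard case, together with an acknowledgement that the hard case remains; the actual proof in \cite{D19} resolves precisely this via a lengthy structural case analysis, so the omission is not a routine detail but the substance of the result.
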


Our work proceeds as follows.
In Section \ref{sec:OrientColourings} we show that if $G$ is an orientation of a connected cubic graph, then $\chi_o(G) \leq 8$, regardless of the presence of sources and sinks.
This improves the best known upper bound for the oriented chromatic number of orientations of connected cubic graphs.
In Section \ref{sec:2DPColorings} we show  that every oriented cubic graph admits a $2$-dipath $7$-colouring.
Together these results point to a new line of attack on the long-standing conjecture of Sopena \cite{SO97} regarding the oriented chromatic number of connected cubic graphs.
This is discussed further in Section \ref{sec:Discussion}.

\section{Oriented Colourings of Orientations of Connected Cubic Graphs}\label{sec:OrientColourings}

For convenience we provide the following result regarding $QR_7$.
\begin{lemma}\label{lem:QR7} \cite{M07}
	\begin{enumerate}
		\item The tournament $QR_7$ is vertex transitive and arc transitive. 
		\item For $yz \in A(QR_7)$,  there exists a pair of distinct vertices $x,x^\prime \in V(QR_7)$ so that $xyz$ and $x^\prime yz$ are directed cycles in $QR_7$.
	\end{enumerate}
\end{lemma}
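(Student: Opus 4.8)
The plan is to realise both parts through explicit automorphisms of $QR_7$ together with a single finite check, relying only on the fact that the nonzero quadratic residues modulo $7$ are $\{1,2,4\}$ and that these form a cyclic multiplicative group (in particular, closed under products and inverses).

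For part (1) I would first exhibit two families of maps of $V(QR_7)=\{0,1,\dots,6\}$ to itself and verify that each preserves arcs. The translations $\tau_c:x\mapsto x+c$ (mod $7$) satisfy $\tau_c(v)-\tau_c(u)=v-u$, so they carry arcs to arcs; since $\tau_{b-a}$ sends $a$ to $b$, the translations already act transitively on $V(QR_7)$, giving vertex transitivity. For arc transitivity I would additionally use the multiplications $\mu_r:x\mapsto rx$ with $r\in\{1,2,4\}$, which satisfy $\mu_r(v)-\mu_r(u)=r(v-u)$, a product of quadratic residues and hence again a residue whenever $v-u$ is one; thus each $\mu_r$ is an automorphism. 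Composing, I would send an arbitrary arc $ab\in A(QR_7)$ to the fixed arc $01$: translate by $-a$ to reach the arc with ends $0$ and $b-a=:r$, where $r\in\{1,2,4\}$ since $ab$ is an arc, and then apply $\mu_{r^{-1}}$ (legitimate because $r^{-1}$ is again a residue) to reach the arc with ends $0$ and $1$. As automorphisms are invertible, any two arcs are equivalent, which is arc transitivity.

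For part (2) I would exploit that automorphisms preserve directed $3$-cycles, so the number of vertices $x$ completing a given arc $yz$ to a directed $3$-cycle is the same for every arc; by part (1) it therefore suffices to count these for the single arc $01$, i.e.\ $y=0$ and $z=1$. A vertex $x$ completes this arc to the directed cycle $x\to 0\to 1\to x$ exactly when $x0,1x\in A(QR_7)$, that is when $-x\in\{1,2,4\}$ and $x-1\in\{1,2,4\}$, i.e.\ $x\in\{6,5,3\}\cap\{2,3,5\}=\{3,5\}$. Taking $x=3$ and $x'=5$ gives the two required distinct vertices for the arc $01$, and transitivity propagates the conclusion to every arc $yz$.

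The genuinely substantive point is the verification that the affine maps $x\mapsto rx+c$ with $r$ a nonzero residue are automorphisms, which rests entirely on the multiplicative structure of the residues; once this is in hand the remainder is a short finite computation. The only mild subtlety is to phrase part (2) so that arc transitivity reduces the twenty-one arcs of $QR_7$ to the single representative $01$, thereby replacing a case analysis over all arcs with one explicit intersection of out- and in-neighbourhoods.
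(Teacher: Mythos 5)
Your proposal is correct in all details: the nonzero quadratic residues modulo $7$ are indeed $\{1,2,4\}$ and form a multiplicative group closed under inverses, so the affine maps $x \mapsto rx+c$ with $r \in \{1,2,4\}$ are automorphisms of $QR_7$; translations give vertex transitivity, the composite $x \mapsto r^{-1}(x-a)$ carries an arbitrary arc $ab$ to $01$, giving arc transitivity; and the finite check for the arc $01$ correctly yields $\{6,5,3\} \cap \{2,3,5\} = \{3,5\}$ as the vertices completing it to a directed triangle, after which transitivity propagates the conclusion to every arc. Note, however, that the paper itself offers no proof of this lemma: it is stated as a known fact with a citation to the literature, so there is no argument in the paper to compare yours against. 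Your write-up is the standard self-contained verification (affine automorphisms plus one neighbourhood intersection), and it would serve as a complete substitute for the citation; the only thing the citation buys over your argument is brevity, while your argument buys independence from the reference.
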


We begin with technical lemma.

\begin{lemma}\label{lem:AltPath}
	If $G$ is an oriented cubic graph with no source and no sink, then 
	\begin{itemize}
		\item $G$ contains a vertex with out-degree $2$ whose out-neighbours both have in-degree $2$; or
		\item $G$ contains a vertex of out-degree $2$ that has an out-neighbour of in-degree $2$ and an in-neighbour of out-degree $2$.
	\end{itemize}
\end{lemma}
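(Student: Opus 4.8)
The plan is to argue by contradiction using a double-counting argument on the arcs that run between the ``out-heavy'' vertices. Since $G$ is cubic with no source and no sink, every vertex $v$ satisfies $1 \le d^+(v) \le 2$, so each vertex is of exactly one of two types: type $A$ (out-degree $2$, in-degree $1$) or type $B$ (out-degree $1$, in-degree $2$). In this language a vertex ``of out-degree $2$'' is a type-$A$ vertex, its ``out-neighbour of in-degree $2$'' is a type-$B$ out-neighbour, and its ``in-neighbour of out-degree $2$'' is a type-$A$ in-neighbour. Thus the first conclusion asserts that some type-$A$ vertex has both out-neighbours of type $B$, and the second asserts that some type-$A$ vertex simultaneously has a type-$B$ out-neighbour and a type-$A$ in-neighbour.

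First I would assume both conclusions fail and extract the forced local structure at every type-$A$ vertex $v$. Failure of the first conclusion forces at least one out-neighbour of $v$ to be type $A$, so the (multiset of) out-neighbours of $v$ is either $\{A,A\}$ or $\{A,B\}$. In the latter case $v$ has a type-$B$ out-neighbour, so failure of the second conclusion forces the unique in-neighbour of $v$ to be type $B$. Hence every type-$A$ vertex falls into exactly one of two cases: case (1), both out-neighbours are type $A$; or case (2), exactly one out-neighbour is type $A$, the other is type $B$, and the in-neighbour is type $B$. Let $a_1$ and $a_2$ denote the numbers of type-$A$ vertices of each case.

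The key step is then to count the arcs whose tail and head are both type $A$; call this number $S$. Counting from the tails, each case-(1) vertex emits two such arcs and each case-(2) vertex emits one, so $S = 2a_1 + a_2$. Counting from the heads, an arc contributing to $S$ corresponds to a type-$A$ vertex whose unique in-neighbour is also type $A$; case-(2) vertices are excluded because their in-neighbour is type $B$, and each case-(1) vertex contributes at most one (it has in-degree $1$), so $S \le a_1$. Combining $2a_1 + a_2 \le a_1$ forces $a_1 = a_2 = 0$, so there are no type-$A$ vertices. But summing out-degrees gives $\sum_v d^+(v) = \sum_v d^-(v)$ equal to the arc count; writing $a+b=n$ and $2a+b = \tfrac{3n}{2}$ yields $a = \tfrac{n}{2} \ge 2$, contradicting $a=0$ for the nonempty graph $G$.

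I expect the main obstacle to be bookkeeping rather than any deep idea: one must translate the negation of the two conclusions into precisely the right per-vertex case split, and then verify that the head-count genuinely bounds $S$ by $a_1$, using both the in-degree $1$ of type-$A$ vertices and the forced type-$B$ in-neighbour in case (2). Getting either local fact slightly wrong collapses the clean inequality $2a_1 + a_2 \le a_1$, so the crux is confirming these two facts exactly.
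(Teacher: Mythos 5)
Your proof is correct, and it takes a genuinely different route from the paper. The paper argues structurally: it partitions the vertices into $V^+$ (out-degree $2$) and $V^-$ (in-degree $2$), observes that if the first conclusion fails then every vertex of $G[V^+]$ has positive out-degree there and hence $G[V^+]$ contains an induced directed cycle, and then follows a maximal directed walk leaving that cycle, showing that the walk must eventually cross an arc from $V^+$ to $V^-$ (yielding the second conclusion) because otherwise one reaches a second, disjoint cycle and a vertex with two in-neighbours inside $V^+$, a contradiction. Your argument replaces all of this with a single double count: negating both conclusions forces every type-$A$ vertex into your case (1) or case (2), and counting the arcs $S$ with both ends of type $A$ from the tail side ($S = 2a_1 + a_2$) and the head side ($S \le a_1$, using in-degree $1$ of type-$A$ vertices and the forced type-$B$ in-neighbour in case (2)) gives $a_1 = a_2 = 0$, contradicting the degree-sum identity $a = n/2 \ge 2$. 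Both proofs are sound; yours is shorter, needs no connectivity reduction and no cycle/walk bookkeeping, and quantifies the obstruction in one inequality, while the paper's walk-based argument is constructive in spirit, tracing explicitly where a desired configuration sits in the digraph, which matches how the configuration is subsequently used in Lemma \ref{lem:triFree}.
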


\begin{proof}
	Without loss of generality  assume $G$ is connected.
	We partition the vertices of $G$ based on their out-degree. Let $V^+$ be the set of vertices of $G$ with out-degree $2$ and $V^-$ be the set of vertices in $G$ with in-degree $2$.
	If $G[V^+]$ has a vertex with out-degree $0$, then such a vertex has two out-neighbours in $V^{-}$ and so is a vertex with out-degree $2$ whose out-neighbours both have in-degree $2$.
	Otherwise, assume every vertex in $V^+$ has out-degree at least $1$ in $G[V^+]$.
	

	As every vertex in $G[V^+]$ has out-degree at least $1$ in  $G[V^+]$, the oriented graph $G[V^+]$ contains at least one directed cycle, $C$.
	Note that such a cycle is necessarily induced;
	the head of a chord in such a cycle would have in-degree $2$.
	Let $x$ be a vertex of this cycle, and consider $xx^\prime \in A(G)$ such that $x^\prime$ is not contained in $C$.
	If $x^\prime \in V^{-}$, then $x$ is a vertex of out-degree $2$ that has an out-neighbour of in-degree $2$ and an in-neighbour of out-degree $2$.
	And so assume $x^\prime \in V^{+}$.
	
	Consider a maximal directed walk $W$ in $G$ beginning with $xx^\prime$ so that at most a single vertex of $G$ appears twice in $W$.
	If $W$ contains a vertex of in-degree $2$, then the first arc of $W$ that has its tail in $V^+$ and its head in $V^-$ contains  a vertex of out-degree $2$ that has an out-neighbour of in-degree $2$ and an in-neighbour of out-degree $2$.
	So assume such a walk contains only vertices with out-degree $2$.
	That is, $W$ is contained wholly in $G[V^+]$.
	
	As  each vertex in $G[V^+]$ has positive out-degree and $W$ is maximal, $W$ contains a directed cycle $C^\prime$.
	We claim  $C$ and $C^\prime$ contain no common vertices.
	If  $y \in C$ and $y \in C^\prime$ then by construction there is a directed path $Q$ from $x$ to $y$ in $G[V^+]$ that begins with the arc $xx^\prime$.
	Since $x^\prime \notin C$ and $y \in C$ there is a first vertex $y^\prime \in Q$ such that the in-neighbour of $y^\prime$ is in $Q$ but not in $C$.
	However $y^\prime$ also has an in-neighbour in $C$.
	This contradicts that $y^\prime \in V^+$
	Therefore $C$ and $C^\prime$ contain no common vertices.
		
	As $W$ begins with $x$ and contains a vertex from $C^\prime$, there is a directed path $P$ (which is a subwalk of $W$) from $x$ to a vertex in $C^\prime$.
	The last vertex on this path, i.e., the first one contained in $C^\prime$, has both an in-neighbour in $P$ and an in-neighbour in $C^\prime$. 
	Such a vertex is contained in $V^-$.
	This contradicts that $W$ is wholly contained within $G[V^+]$.
	Thus $W$ contains a vertex of in-degree $2$ and the proof is complete.
\end{proof}

\begin{lemma}\label{lem:notTriFree}
	If $G$ is an oriented connected cubic graph with no source and no sink and $U(G)$ contains a triangle, then $\chi_o(G) \leq 8$.
\end{lemma}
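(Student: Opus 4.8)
The plan is to delete a single vertex of the triangle, colour what remains with the seven colours of $QR_7$ via Theorem~\ref{thm:subCubic}, and assign the eighth colour to the deleted vertex. Let $v_1v_2v_3$ be a triangle of $U(G)$; since $G$ is cubic, each $v_i$ has a unique neighbour $u_i$ outside the triangle. I would first choose the vertex to delete according to the orientation of the triangle: take $v$ to be a triangle vertex having exactly one in-neighbour and one out-neighbour inside the triangle. Every vertex has this property when the triangle is a directed $3$-cycle, and the unique ``middle'' vertex has it when the triangle is transitively oriented; such a $v$ therefore always exists.

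Deleting $v$ lowers the degree of each of its three neighbours to two, so every component of $G-v$ is properly subcubic. Moreover, every degree-$3$ vertex of $G-v$ is non-adjacent to $v$ and hence retains all of its arcs from $G$; since $G$ has neither a source nor a sink, no such vertex becomes a source or a sink in $G-v$. Thus $G-v$ contains no degree-$3$ source and no degree-$3$ sink at all, so in particular the hypothesis of Theorem~\ref{thm:subCubic} is met on every component, giving $G-v \to QR_7$. I would fix such a homomorphism $\phi$, noting that because $QR_7$ is a tournament any homomorphism into it is automatically a valid oriented colouring.

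To finish I would build the target $H$ on eight vertices by adjoining a new vertex $\infty$ to $QR_7$, and extend $\phi$ by setting $v \mapsto \infty$. For this to be a homomorphism one only needs to orient the arcs at $\infty$ to agree with those at $v$, which is possible exactly when no colour is used simultaneously on an in-neighbour and an out-neighbour of $v$. The two triangle-neighbours of $v$ lie on opposite sides of $v$ and are adjacent to each other, hence receive distinct colours under $\phi$; so the choice of $v$ collapses the whole requirement to a single inequality, namely $\phi(u) \neq \phi(w)$, where $u$ is the outside neighbour of $v$ and $w$ is the triangle-neighbour of $v$ on the side opposite $u$.

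The hard part is guaranteeing this last inequality, since $u$ and $w$ need not be adjacent and $\phi$ is not under our control. Both $u$ and $w$ have degree two in $G-v$, and because a map into the tournament $QR_7$ is automatically a proper oriented colouring, recolouring one of them only requires a colour compatible with its two incident arcs. Using the vertex- and arc-transitivity of $QR_7$ together with Lemma~\ref{lem:QR7}(2)—which, for a given arc $yz$, supplies two distinct vertices completing it to a directed triangle—I would show that the set of admissible colours for $w$ (or for $u$) has size at least two in the relevant orientation pattern, so that one can be chosen to avoid the forbidden value and repair the coincidence. The delicate point, where a short case analysis is unavoidable, is the configuration in which the two incident arcs leave only a single admissible colour; there I would fall back on the other recolourable vertex, and, in the directed $3$-cycle case, on the freedom to re-select $v$ among $v_1,v_2,v_3$.
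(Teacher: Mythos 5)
Your construction differs from the paper's at the very first move --- you delete a triangle \emph{vertex}, the paper deletes a triangle \emph{arc} --- and your reduction is correct up to the last step: the chosen $v$ exists, every component of $G-v$ satisfies the hypotheses of Theorem~\ref{thm:subCubic}, and extending a homomorphism $\phi:G-v\to QR_7$ by an eighth colour on $v$ works precisely when $\phi(u)\neq\phi(w)$. The gap is in the repair step that is supposed to force this inequality. Your claim that the admissible colour set of a degree-$2$ vertex ``has size at least two in the relevant orientation pattern'' is false in most patterns: $QR_7$ is doubly regular, so any two distinct vertices have exactly \emph{one} common out-neighbour and exactly one common in-neighbour. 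Hence if $w$ has two in-neighbours (or two out-neighbours) in $G-v$ carrying distinct colours, its colour is completely forced; the same holds if $w$ is the centre of a $2$-dipath whose end-colours $p$ (in-end) and $q$ (out-end) satisfy $pq\in A(QR_7)$ --- part (2) of Lemma~\ref{lem:QR7} supplies two choices only in the opposite, directed-cycle, pattern $qp\in A(QR_7)$. Theorem~\ref{thm:subCubic} is a black box: nothing prevents the homomorphism it hands you from freezing \emph{both} $u$ and $w$ in such patterns while assigning them the same colour, and then neither vertex can be recoloured at all.

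Your fallbacks do not close this. ``Fall back on the other recolourable vertex'' fails exactly in the simultaneous-freeze case above. ``Re-select $v$'' is unavailable when the triangle is transitively oriented, since the middle vertex is then the \emph{unique} admissible choice of $v$; and even for a directed $3$-cycle, each re-selection means re-invoking Theorem~\ref{thm:subCubic} on a different graph $G-v_i$, whose homomorphism you again do not control, so all three choices can independently fail --- you would need an argument that some choice always succeeds, and none is given. It is instructive to see how the paper sidesteps this entirely: it deletes the triangle arc $uv$, so both ends survive with degree $2$; the dangerous coincidence $\phi(u)=\phi(v)$ is then killed instantly by giving $u$ the eighth colour (legal because $w$ is adjacent to both $u$ and $v$ in $G-uv$), and otherwise arc-transitivity normalizes $\phi(u),\phi(v)$, forcing the colours of the outside neighbours $u',v'$, after which $u$, $v$, $w$ are recoloured \emph{jointly} by explicit arithmetic in $QR_7$ (three choices for the pair $(\phi(u),\phi(v))$, two for $\phi(w)$ each, covering every possibility for $w'$). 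The choice of what to delete is the missing idea: deleting an arc keeps the conflict pair inside the coloured graph where it can be repaired, whereas deleting a vertex leaves the conflict between two frozen, far-apart vertices.
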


\begin{proof}
	Let $G$ be an oriented connected cubic graph with no source and no sink so that the vertices $u,v,w$ induce a triangle in $U(G)$ and $uv \in A(G)$.
	Without loss of generality, there are two possible orientations: $uvw$ is a directed cycle or $uw,vw\in A(G)$.
	
	By Theorem \ref{thm:subCubic}, there is a homomorphism $\phi: G - uv \to QR_7$.
	If $\phi(u) = \phi(v)$, then modifying $\phi$ to let $\phi(u) = 7$ yields an oriented $8$-colouring of $G$.
	And so we may assume $\phi(u) \neq \phi(v)$.
	As $QR_7$ is arc transitive, we may assume $\phi(v) = 0$ and $\phi(u) = 1$.
	(Note that if $\phi(v) = 1$ and $\phi(u) = 0$, then $\phi: G \to QR_7$).
	Let $u^\prime \neq w$ and $v^\prime \neq w$ so that $u^\prime$ and $v^\prime$ are respectively neighbours of $u$ and $v$.
	If  $uu^\prime \in A(G)$ or $\phi(u^\prime )\neq 0$, then modifying $\phi$ such that $\phi(u) = 7$ yields an oriented $8$-colouring of $G$.
	And so we may assume  $u^\prime u \in A(G)$ and $\phi(u^\prime) = 0$.
	Similarly, we may assume  $vv^\prime \in A(G)$ and $\phi(v^\prime) = 1$.
	Let $w^\prime\notin \{u,v\}$ be a neighbour of $w$. 
	
	\emph{Case I: $uvw$ is a directed cycle in $G$.}
	Consider modifying $\phi$ in one of three ways:
	\begin{enumerate}
		\item $\phi(u) = 2, \phi(v) = 4$;
		\item $\phi(u) = 2, \phi(v) = 6$; or
		\item $\phi(u) = 4, \phi(v) = 6$.
	\end{enumerate}
	Note that in each case $\phi(u)\phi(u^\prime), \phi(v^\prime)\phi(v),\phi(u)\phi(v) \in A(QR_7)$.
	By part (2) of Lemma \ref{lem:QR7}, each of these three possibilities allow us to  modify $\phi(w)$ in two possible ways so that $\phi(w)\phi(u), \phi(v)\phi(w) \in A(QR_7)$:
	\begin{enumerate}
	\item $\phi(w) = 1,5$;
	\item $\phi(w) = 0,1$; or
	\item $\phi(w) = 0,3$.
	\end{enumerate}

Note now that each vertex of $QR_7$  has both an in-neighbour and an out-neighbour in the set $\{0,1,3,5\}$.
As such, regardless of the orientation of the edge $ww^\prime$ and the value of $\phi(w^\prime)$, we can choose $\phi(u),\phi(v)$ and $\phi(w)$ so that $\phi:G \to QR_7$ is a homomorphism.
This completes Case I.
 
\emph{Case II: $uw,vw\in A(G)$.}
Since $G$ has no source or sink vertex, $ww^\prime \in V(G)$.
Proceeding as in Case I, we note that  $\phi(u)$ and $\phi(v)$ can be modified so that $\phi(w^\prime) \notin \{\phi(u),\phi(v)\}$ and $\phi(u)\phi(v) \in A(QR_7)$.
Modify $\phi$ so that $ \phi(u) \neq \phi(w^\prime), \phi(v) \neq \phi(w^\prime)$ and $\phi(w) = 7$.
This yields an oriented $8$-colouring of $G$.
\end{proof}

\begin{lemma}\label{lem:triFree}
	If $G$ is an oriented connected cubic graph with no source and no sink and $U(G)$ is triangle free, then $\chi_o(G)\leq 8$.
\end{lemma}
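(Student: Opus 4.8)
The plan is to mimic the edge-deletion strategy of Lemma~\ref{lem:notTriFree}, but to locate the edge to delete using Lemma~\ref{lem:AltPath} rather than a triangle. Assume $G$ is connected. By Lemma~\ref{lem:AltPath} there is a vertex $x$ of out-degree $2$ one of whose out-neighbours, say $a$, has in-degree $2$; write $w$ for the in-neighbour of $x$, $b$ for its second out-neighbour, $a'$ for the second in-neighbour of $a$, and $a''$ for the out-neighbour of $a$. I would delete the arc $xa$. Since $G$ has neither a source nor a sink, deleting a single arc cannot turn a degree-$3$ vertex into a source or a sink, so $G-xa$ is a connected properly subcubic oriented graph with no degree-$3$ source adjacent to a degree-$3$ sink. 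Theorem~\ref{thm:subCubic} then supplies a homomorphism $\phi\colon G-xa \to QR_7$.

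The core of the argument is a collection of cheap ways to recover a colouring of $G$ from $\phi$. First, since $QR_7$ is a tournament (hence loopless and antisymmetric), \emph{any} homomorphism into it automatically satisfies both conditions in the definition of an oriented colouring, so I never have to verify condition~(2) by hand. Consequently, if $\phi(x)\phi(a)\in A(QR_7)$ then $\phi$ is already a homomorphism $G\to QR_7$ and $\chi_o(G)\le 7$. Otherwise I would introduce an eighth colour. Recolouring $x$ with the new colour $7$ yields a valid oriented $8$-colouring precisely when $x$'s in-neighbour is coloured differently from each of its out-neighbours, i.e.\ when $\phi(w)\neq\phi(a)$ and $\phi(w)\neq\phi(b)$; but the $2$-dipath $wxb$ survives in $G-xa$, so $\phi(w)\neq\phi(b)$ holds for free and only $\phi(w)\neq\phi(a)$ must be arranged. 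Symmetrically, recolouring $a$ with $7$ works as soon as $\phi(x)\neq\phi(a'')$, the inequality $\phi(a')\neq\phi(a'')$ coming for free from the surviving $2$-dipath $a'aa''$.

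To create the missing inequality I would exploit the freedom to recolour the degree-$2$ vertices of $G-xa$. Because membership of arcs is the only thing a homomorphism into a tournament must respect, $a$ may be recoloured to any element of $T:=N^+(\phi(a'))\cap N^-(\phi(a''))$ and $x$ to any element of $S:=N^+(\phi(w))\cap N^-(\phi(b))$ without destroying the homomorphism; note $\phi(a)\in T$ and $\phi(x)\in S$, so both sets are nonempty. If some $t\in T$ satisfies $t\neq\phi(w)$, then recolouring $a$ to $t$ and $x$ to $7$ finishes the proof, and when $\phi(a'')\phi(a')\in A(QR_7)$ part~(2) of Lemma~\ref{lem:QR7}, applied to that arc, shows $|T|=2$, so such a $t$ exists. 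An entirely analogous statement lets me recolour $a$ with $7$ whenever $|S|=2$. Triangle-freeness is what keeps these local moves honest: it guarantees that $w,a,b,a',a''$ are distinct and non-adjacent where needed, so adjusting one of them does not silently create a conflict at another.

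The main obstacle is the single remaining configuration in which all of these cheap moves fail simultaneously, namely $\phi(w)=\phi(a)$ and $\phi(x)=\phi(a'')$ with $\phi(a)\phi(x)\in A(QR_7)$ and with both $T=\{\phi(a)\}$ and $S=\{\phi(x)\}$ singletons (equivalently, $\phi(a')\phi(a'')$ and $\phi(w)\phi(b)$ are arcs, placing $a$ and $x$ in transitive triples). Here neither endpoint of the deleted arc can be rerouted locally, so I would break the deadlock by moving one step further out: recolouring $a''$ (or $a'$), switching to the second out-neighbour $b$, or invoking the second alternative of Lemma~\ref{lem:AltPath}, in which the in-neighbour $w$ itself has out-degree $2$ and hence offers an extra arc to reorient. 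I expect this last configuration to absorb the bulk of the casework, and the crux will be to show, using the arc- and cycle-transitivity of $QR_7$ from Lemma~\ref{lem:QR7} together with triangle-freeness, that at least one of these secondary recolourings always re-creates the needed inequality, thereby yielding $\chi_o(G)\le 8$ in every case.
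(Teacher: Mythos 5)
Your proposal is not a complete proof: the ``deadlock'' configuration you describe at the end is a genuine, realizable case, and you leave it unresolved. Concretely, work in $QR_7$ with quadratic residues $\{1,2,4\}$ and set $\phi(a')=0$, $\phi(a)=\phi(w)=2$, $\phi(x)=\phi(a'')=4$, $\phi(b)=6$. Every local arc is respected ($2\to4$, $4\to6$, $0\to2$, $2\to4$), the direct extension fails since $4\to2$ is not an arc, and both of your repair sets are singletons: $T=N^+(0)\cap N^-(4)=\{2\}=\{\phi(w)\}$ and $S=N^+(2)\cap N^-(6)=\{4\}=\{\phi(a'')\}$ (in $QR_7$ every arc has exactly one ``transitive middle'', so whenever $\phi(a')\phi(a'')$ is an arc your $|T|=2$ argument is unavailable). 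So all the cheap moves fail simultaneously, exactly as you fear. Your proposed escapes are not cheap either: $a'$, $a''$, $w$, $b$ all have degree $3$ in $G-xa$, so every one of their arcs constrains their colour, and Theorem \ref{thm:subCubic} gives you no control over how the rest of $G$ pins them down. The sentence ``the crux will be to show\dots'' is precisely the part of the lemma that still needs proving. There is also a smaller unjustified step: $xa$ may be a cut arc, so $G-xa$ need not be connected; this is fixable (apply Theorem \ref{thm:subCubic} to each component and use the transitivity of $QR_7$ to align the colours of $x$ and $a$ across the cut, which is how the paper handles its cut-arc case), but as written the connectivity claim is a non sequitur.

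The paper escapes the deadlock by a different surgery that manufactures the needed freedom instead of hoping for it. Rather than deleting the arc $xu$ (your $xa$), it deletes the vertex $x$ entirely and adds the \emph{reversed} arc $yz$ from the second out-neighbour of $x$ to the in-neighbour of $x$ (your $b\to w$). Any homomorphism $\phi:G'\to QR_7$ must map $y,z$ to an arc of $QR_7$, and re-inserting $x$ means completing that arc to a directed triangle; part (2) of Lemma \ref{lem:QR7} then guarantees \emph{two} admissible colours for $x$, so one of them avoids $\phi(v)$ (where $v$ is the out-neighbour of $u$, your $a''$; the paper normalizes $\phi(v)=0$ by vertex transitivity). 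The eighth colour then goes on $u$, the in-degree-$2$ endpoint, which is legal because both in-neighbours of $u$ avoid colour $0$: the vertex $x$ by the choice just made, and $w$ because the $2$-dipath $wuv$ survives in $G'$. Two further ingredients of the paper are invisible in your approach but essential in its setting: triangle-freeness guarantees $y$ and $z$ are non-adjacent so the arc $yz$ can be added at all, and the stronger disjunction in Lemma \ref{lem:AltPath} ($z$ has out-degree $2$ or $y$ has in-degree $2$) is what rules out a degree-$3$ source adjacent to a degree-$3$ sink in $G'$ after the surgery.
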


\begin{proof}
	Let $G$ be an oriented connected cubic graph with no source and no sink so that $U(G)$ is triangle free.
	Consider first the case that $G$ has a cut arc, say $uv$.
	Let $G_u$ be the component of $G-uv$ that contains $u$.
	Similarly, let $G_v$ be the component of $G-uv$ that contains $v$.
	By Theorem \ref{thm:subCubic}, each of $G_u$ and $G_v$ admit a homomorphism to $QR_7$.
	Further, by Lemma \ref{lem:QR7} there exist homomorphisms $\phi_u: G_u \to QR_7$ and $\phi_v: G_v \to QR_7$ so that $\phi_u(u)=0$ and $\phi_v(v) = 1$.
	Combining $\phi_u$ and $\phi_v$ yields a homomorphism of $G$ to $QR_7$. 
	Thus $\chi_o(G)\leq 7$.
	
	Assume now that $G$ has no cut arc.
	By Lemma \ref{lem:AltPath}, in $G$ there is an arc from a vertex of out-degree $2$ to a vertex of in-degree $2$.
	Let $x$ and $u$ be such vertices so that $xu \in A(G)$.
	Let $v$ be the out-neighbour of $u$.
	Let $w\neq x$ be an in-neighbour of $u$.
	Let $z$ be the in-neighbour of $x$.
	Let $y \neq u$ be an out-neighbour of $x$.
	By Lemma \ref{lem:AltPath} we may choose $x$ and $u$ so that $z$ has out-degree $2$ or $y$ has in-degree $2$.
	Note that as $U(G)$ is triangle free, $z$ and $y$ are not adjacent.
	Construct $G^\prime$ from $G$ by removing $x$ and adding the arc $yz$.
	The oriented graph $G^\prime$ is properly subcubic.
	We further note that as $xu$ is not a cut arc, the oriented graph $G^\prime$ is connected.
	
	Recall that in $G$ vertex $z$ has out-degree $2$ or $y$ has in-degree $2$.
	If $y$ is a source vertex in $G^\prime$, then $y$ has out-degree $2$ in $G$.
	Therefore $z$ has out-degree $2$ in $G$ and so is not a sink vertex in $G^\prime$.
	Similarly, if $z$ is a sink vertex in $G^\prime$, then $y$ is not a source vertex in $G^\prime$.
	As $G$ has no source vertex and no sink vertex, exactly one of the following is true: (1) $G^\prime$ has no source or sink vertex of degree $3$; (2) $G^\prime$ has a sink vertex of degree $3$, namely $z$, and no source vertex of degree $3$, or (3) $G^\prime$ has a source vertex of degree $3$, namely $y$, and no sink vertex of degree $3$.
	Thus $G^\prime$, has no degree $3$ source vertex adjacent to a degree $3$ sink vertex.
	By Theorem \ref{thm:subCubic}, there exists a homomorphism $\phi: G^\prime \to QR_7$.
	As $QR_7$ is vertex transitive, we may assume $\phi(v) = 0$.
	By part (2) of Lemma \ref{lem:QR7}, $\phi$ can be extended to include $x$ so that $\phi(x) \neq 0$.
	Note that $\phi(w) \neq 0$ as there is a $2$-dipath from $w$ to $v$ and $\phi(v) = 0$.
	Recoloring $u$ so that $\phi(u) = 7$ gives an oriented $8$-colouring of $G$.

\end{proof}

\begin{theorem} \label{thm:main}
	If $G$ is an oriented connected cubic graph, then $\chi_o(G) \leq 8$.
\end{theorem}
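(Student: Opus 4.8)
The plan is to reduce the theorem to the lemmas already established, via a short and exhaustive case analysis on the local degree structure of $G$. Since $G$ is assumed connected, I may work directly with it and need not worry about combining colourings across components. The three cases I would use are: $G$ has a source or a sink; $G$ has neither but $U(G)$ contains a triangle; and $G$ has neither and $U(G)$ is triangle-free.

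First I would dispatch the case in which $G$ has a source or a sink. Here the bound $\chi_o(G) \leq 8$ is precisely the strengthened result of \cite{D19} recalled in the introduction (the improvement of the general bound $9$ to $8$ for orientations of connected cubic graphs possessing a source or a sink), so no new argument is required. It is worth flagging that this case cannot be absorbed into the two lemmas below, since both of them explicitly hypothesise that $G$ has no source and no sink.

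Next, assuming $G$ has neither a source nor a sink, I would split on the presence of a triangle in $U(G)$. If $U(G)$ contains a triangle, then $\chi_o(G) \leq 8$ by Lemma \ref{lem:notTriFree}; if $U(G)$ is triangle-free, then $\chi_o(G) \leq 8$ by Lemma \ref{lem:triFree}. Since every oriented connected cubic graph falls into exactly one of the three cases, we conclude $\chi_o(G) \leq 8$ in all cases, completing the proof.

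Because the entire technical content resides in the lemmas, the theorem itself is a routine assembly; the only thing that genuinely needs care is verifying that the case division is complete, and in particular that the source/sink possibility is genuinely covered by \cite{D19} rather than silently assumed away. The real obstacle was therefore already overcome earlier, most notably in Lemma \ref{lem:triFree}, where a vertex-deletion-and-arc-addition reduction is performed so that Theorem \ref{thm:subCubic} can be invoked on a connected properly subcubic graph with no degree-$3$ source adjacent to a degree-$3$ sink, and in Lemma \ref{lem:AltPath}, which supplies the structural configuration that reduction relies upon.
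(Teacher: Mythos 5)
Your proposal is correct and follows exactly the paper's own proof: the same three-way case split (source or sink, handled by the strengthened bound of \cite{D19}; no source/sink with a triangle, handled by Lemma \ref{lem:notTriFree}; no source/sink and triangle-free, handled by Lemma \ref{lem:triFree}). The paper cites this first case as Corollary 4.9 of \cite{D19}, which is precisely the result you invoke.
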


\begin{proof}
	If $G$ has a source or a sink vertex, then the result follows from Corollary 4.9 in \cite{D19}.
	If $G$ has no source and no sink and $U(G)$ contains a triangle, then the result follows by Lemma \ref{lem:notTriFree}.
	Otherwise, $G$ has no source and no sink and $U(G)$ contains no triangle. 
	The result then follows by Lemma \ref{lem:triFree}.
\end{proof}

Figure 1 in \cite{D19} gives an oriented clique on $7$ vertices whose underlying graph has maximum degree $3$.
Thus  $\chi_o(\mathcal{F}^C_3) \geq 7$.
Combining this with the statement of Theorem \ref{thm:main} yields the following.

\begin{theorem}
	For $\mathcal{F}^C_3$, the family of orientations of connected cubic graphs, we have $7 \leq \chi_o(\mathcal{F}^C_3) \leq 8$.
\end{theorem}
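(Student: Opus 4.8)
The plan is to establish the two inequalities separately. The upper bound $\chi_o(\mathcal{F}^C_3) \leq 8$ is immediate: Theorem \ref{thm:main} gives $\chi_o(G) \leq 8$ for every $G \in \mathcal{F}^C_3$, and by definition $\chi_o(\mathcal{F}^C_3)$ is the least $k$ with $\chi_o(F) \leq k$ for all $F \in \mathcal{F}^C_3$. So essentially all of the work goes into the lower bound $\chi_o(\mathcal{F}^C_3) \geq 7$, which amounts to exhibiting a single member of $\mathcal{F}^C_3$ whose oriented chromatic number is at least $7$.

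For the lower bound I would start from the oriented clique $H$ on $7$ vertices of Figure 1 in \cite{D19}, whose underlying graph $U(H)$ has maximum degree $3$. Since $H$ is an oriented clique, the discussion of Section 1 gives $\chi_o(H) = |V(H)| = 7$. The second ingredient is the monotonicity of $\chi_o$ under taking subgraphs: if $H$ is a subgraph of $G$, then any oriented colouring of $G$ restricts to one of $H$, so $\chi_o(H) \leq \chi_o(G)$. Thus it suffices to produce a graph $G \in \mathcal{F}^C_3$ that contains $H$ as a subgraph; any such $G$ then satisfies $\chi_o(G) \geq 7$, which forces $\chi_o(\mathcal{F}^C_3) \geq 7$.

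It remains to embed $H$ into a connected cubic graph. Here I note that $U(H)$ is connected, since in an oriented clique any two vertices are joined by a path of length at most two, but that $U(H)$ is not itself cubic; indeed no cubic graph has $7$ vertices, as $3 \cdot 7$ is odd. So I would extend $U(H)$ by attaching a connected auxiliary gadget that raises each vertex of deficiency $3 - d_{U(H)}(v)$ to degree exactly $3$ while introducing no parallel edges and keeping the whole graph connected; the new edges are oriented arbitrarily, which is harmless because monotonicity only requires $H$ to sit inside $G$ as a subgraph. This is a routine cubic-completion construction. The one point needing care is the parity bookkeeping: the total deficiency $21 - 2|E(U(H))|$ is odd, so the gadget must contribute an odd number of fresh vertices, and one must still match up the remaining half-edges without creating multiple edges.

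The main obstacle, such as it is, lies in this final embedding step, namely verifying that a connected subcubic oriented graph on an odd number of vertices can always be completed to a connected cubic simple graph containing it as a subgraph. Once that (essentially folklore) fact is in hand, combining $\chi_o(G) \geq 7$ for the completed graph with the upper bound from Theorem \ref{thm:main} yields $7 \leq \chi_o(\mathcal{F}^C_3) \leq 8$, completing the proof.
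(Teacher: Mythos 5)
Your proposal takes exactly the paper's route: the upper bound is Theorem \ref{thm:main}, and the lower bound comes from the $7$-vertex oriented clique of maximum degree $3$ in Figure 1 of \cite{D19}. If anything you are more careful than the paper, which deduces $\chi_o(\mathcal{F}^C_3) \geq 7$ directly from the existence of that clique and leaves implicit the (correct, folklore) step you spell out --- that the clique embeds, via a parity-respecting cubic completion, into an orientation of a connected cubic graph.
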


\section{$2$-dipath Colourings of Orientations of Cubic Graphs}\label{sec:2DPColorings}

For an oriented graph $G$, let $G^2$ be the simple undirected graph formed from $G$ by first adding an edge between any pair of vertices at directed distance exactly $2$ in $G$ (i.e., vertices at the end of an induced $2$-dipath) and then changing all arcs to edges.
One easily observes $\chi(G^2) = \chi_{2d}(G)$.
Thus we approach our study of $\chi_{2d}(\mathcal{F}_3)$ by examining the chromatic number of graphs of the form $G^2$ for $G \in \mathcal{F}_3$.
Let $\mathcal{F}_3^2 = \{ G^2 | G \in \mathcal{F}_3 \}$.
In \cite{D18} the authors establish $\omega(\mathcal{F}_3^2) = 7$.
Thus $\chi_{2d}(\mathcal{F}_3) \geq 7$.
Here we show $\chi_{2d}(\mathcal{F}_3) = 7$.

\begin{lemma}\label{lem:deg6}
	If $G$ is an orientation of a cubic graph, then $G^2$ is $7$-regular or $G^2$ has a vertex of degree at most $6$.
\end{lemma}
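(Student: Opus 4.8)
The plan is to avoid bounding the degree of each vertex of $G^2$ individually, since that approach fails: a vertex of $G$ can acquire as many as $9$ neighbours in $G^2$ (for instance a vertex of out-degree $2$ and in-degree $1$ contributes $3$ original neighbours, up to $4$ out-directed distance-$2$ neighbours, and up to $2$ in-directed distance-$2$ neighbours). Instead I would control the \emph{average} degree of $G^2$ through a single global edge count and then read off the stated dichotomy. Throughout, write $n = |V(G)| = |V(G^2)|$ and split $E(G^2)$ into the edges inherited from $U(G)$ and the edges added between ends of induced $2$-dipaths.

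First I would observe that $U(G)$ is cubic, so $G$ contributes exactly $|E(G)| = 3n/2$ edges to $G^2$. For the remaining edges, note that every edge of $G^2$ that is not already an edge of $U(G)$ joins the two ends of some directed path $u \to x \to w$ of length two; hence the number of added edges is at most the number of (ordered) $2$-dipaths of $G$. A $2$-dipath is determined by its centre $x$ together with a choice of in-neighbour and out-neighbour of $x$, so the number of $2$-dipaths equals $\sum_{x \in V(G)} d^+(x)\,d^-(x)$, where $d^+(x)$ and $d^-(x)$ denote the out- and in-degree of $x$. Since $d^+(x) + d^-(x) = 3$ for every $x$, the product $d^+(x)\,d^-(x)$ is at most $2$, so there are at most $2n$ such $2$-dipaths and therefore at most $2n$ added edges.

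Combining these bounds gives $|E(G^2)| \le 3n/2 + 2n = 7n/2$, so the degree sum of $G^2$ is at most $7n$ and the average degree is at most $7$. The dichotomy then follows at once: if every vertex of $G^2$ has degree at least $7$, the degree sum is at least $7n$, forcing equality everywhere and making $G^2$ exactly $7$-regular; otherwise some vertex has degree at most $6$.

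The individual steps are routine counting, so the only point needing care is the inequality ``number of added edges $\le$ number of $2$-dipaths.'' The subtlety is that the correspondence between added edges and witnessing $2$-dipaths is not a bijection: a single edge of $G^2$ may be realised by several distinct $2$-dipaths (for example $u \to x \to w$ and $u \to x' \to w$, or one dipath in each direction), and some $2$-dipaths join vertices already adjacent in $U(G)$ and so add no edge. Both effects only decrease the added-edge count relative to the $2$-dipath count, so after selecting one witnessing dipath per added edge the assignment is injective and the inequality is safe; in particular no double counting can inflate the bound past $7n/2$.
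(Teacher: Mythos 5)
Your proposal is correct and follows essentially the same route as the paper: both bound $|E(G^2)|$ by $\frac{3n}{2}$ arc-edges plus at most $2n$ edges from $2$-dipaths (at most two per centre vertex, since $d^+(x)d^-(x)\le 2$ when $d^+(x)+d^-(x)=3$), conclude the average degree is at most $7$, and read off the dichotomy. Your write-up is in fact slightly more careful than the paper's, which asserts ``average degree $7$'' where it means ``at most $7$'' and leaves the final dichotomy step implicit.
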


\begin{proof}
	Let $G$ be an orientation of a cubic graph  with $n$ vertices.
	We show  $G^2$ has average degree $7$.
	Every edge in $G^2$ corresponds to  an arc in $G$ or to an induced $2$-dipath in $G$.
	Every vertex in $G$ is the centre vertex of at most two induced $2$-dipaths.
	Therefore there are at most $2n$ induced $2$-dipaths in $G$.
	And so $|E(G^2)| \leq \frac{3n}{2} + 2n  = \frac{7n}{2}$.
	Thus $G^2$ has average degree $7$.
\end{proof}

\begin{lemma}\label{lem:notComplete}
		If $G$ is an orientation of a cubic graph and $G^2$ is $7$-regular, then $\chi(G^2) \leq 7$.
\end{lemma}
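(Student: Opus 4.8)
The plan is to invoke Brooks' theorem. Recall that Brooks' theorem states that a connected graph $H$ that is neither complete nor an odd cycle satisfies $\chi(H) \leq \Delta(H)$. Since $G^2$ is assumed to be $7$-regular, every connected component $C$ of $G^2$ is itself a connected $7$-regular graph, and in particular $\Delta(C) = 7$. So the strategy reduces to checking, component by component, whether the exceptional cases of Brooks' theorem can occur.

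First I would dispose of the two exceptions. A $7$-regular graph cannot be an odd cycle, since odd cycles are $2$-regular. And a connected $7$-regular graph is complete precisely when it is $K_8$. Hence, for any component $C$ that is not $K_8$, Brooks' theorem immediately gives $\chi(C) \leq 7$; colouring each component independently then yields $\chi(G^2) \leq 7$. The entire argument therefore hinges on ruling out the possibility that some component of $G^2$ equals $K_8$.

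This is where I would use the earlier result. If a component of $G^2$ were $K_8$, then $G^2$ would contain eight mutually adjacent vertices, so $\omega(G^2) \geq 8$. But \cite{D18} establishes $\omega(\mathcal{F}_3^2) = 7$, so $\omega(G^2) \leq 7$ for every $G \in \mathcal{F}_3$, and in particular $G^2$ contains no $K_8$. Consequently no component of $G^2$ is complete, every component avoids both Brooks exceptions, and $\chi(G^2) \leq 7$ as desired.

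The only genuine obstacle is the exclusion of a $K_8$ component, and I expect that to be handled cleanly by the clique-number bound $\omega(\mathcal{F}_3^2) = 7$ rather than by any fresh structural analysis of $G$; everything else is a routine invocation of Brooks' theorem applied separately to each $7$-regular component.
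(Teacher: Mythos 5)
Your proof is correct and follows essentially the same route as the paper: apply Brooks' theorem and reduce to showing the complete-graph exception cannot occur. The only difference is the justification for excluding $K_8$: you invoke the clique-number bound $\omega(\mathcal{F}_3^2) = 7$ from \cite{D18} (stated in this same section of the paper), whereas the paper notes that a complete $G^2$ would make $G$ an oriented clique on at least $8$ vertices, contradicting Proposition 3.3 of \cite{D19}; both are valid, and your component-by-component application of Brooks' theorem is, if anything, slightly more careful, since the lemma does not assume $G$ is connected.
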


\begin{proof}
	By Brooks' Theorem, it suffices to show  $G^2$ is not a complete graph.
	If $G^2$ is a complete graph, then it an oriented clique with at least $8$ vertices.
	However this contradicts the statement of Proposition 3.3 in \cite{D19}.
\end{proof}

\begin{lemma}\label{lem:noadjst}
	If $ G$ is an orientation of a connected cubic graph and $G$ has a source vertex adjacent to sink vertex, then $\chi_{2d}(G) \leq 7$.
\end{lemma}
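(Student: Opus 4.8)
The plan is to delete the source–sink pair, colour what remains using Theorem~\ref{thm:subCubic}, and then extend the colouring back across the two deleted vertices. Since $\chi_{2d}(G)=\chi(G^2)$, it suffices to produce a proper $7$-colouring of $G^2$. Fix the source $s$ with $N^+(s)=\{t,a,b\}$ and the sink $t$ with $N^-(t)=\{s,c,d\}$, where $st\in A(G)$. The key structural observation is that neither $s$ nor $t$ is the centre of any $2$-dipath: $s$ has in-degree $0$ and $t$ has out-degree $0$. Hence deleting $\{s,t\}$ destroys only the $2$-dipaths having $s$ or $t$ as an \emph{endpoint}, and a short check gives the clean identity $(G-\{s,t\})^2=G^2-\{s,t\}$. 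Consequently any proper $7$-colouring of $(G-\{s,t\})^2$ is already a proper $7$-colouring of $G^2$ on $V(G)\setminus\{s,t\}$, and only the extension across $s$ and $t$ remains.

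Next I would set $G'=G-\{s,t\}$ and note that $G'$ is properly subcubic, since $a,b$ lose the in-arc from $s$ and $c,d$ lose the out-arc to $t$. Because $G$ is connected, every component of $G'$ contains one of $a,b,c,d$ and so is properly subcubic (never cubic), and $G^2-\{s,t\}$ has no edges between distinct components of $G'$; thus one may treat components independently. After arranging the hypotheses of Theorem~\ref{thm:subCubic} on each component — connectivity together with the absence of a degree-$3$ source adjacent to a degree-$3$ sink, if necessary by inserting a single auxiliary arc among $\{a,b,c,d\}$ in the spirit of Lemma~\ref{lem:triFree} (a colouring proper on the larger graph $(G'+\text{arc})^2\supseteq G'^2$ is still proper on $G^2-\{s,t\}$, so such an insertion is harmless) — I obtain a homomorphism $G'\to QR_7$. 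As a homomorphism to an oriented graph this is an oriented $7$-colouring, hence a $2$-dipath $7$-colouring of $G'$, i.e. a proper $7$-colouring of $G^2-\{s,t\}$.

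Finally I would extend to $s$ and $t$. The $G^2$-neighbours of $s$ other than $t$ are $A=\{a,b\}\cup N^+(a)\cup N^+(b)$, and those of $t$ other than $s$ are $B=\{c,d\}\cup N^-(c)\cup N^-(d)$; since each of $a,b$ has out-degree at most $2$ and each of $c,d$ has in-degree at most $2$, we get $|A|\le 6$ and $|B|\le 6$. I would colour $t$ with a colour not appearing on $B$, and colour $s$ with a colour appearing neither on $A$ nor on $t$, choosing the colour of $t$ to already occur on $A$ whenever possible so that $s$ still sees at most six forbidden colours. This succeeds unless we land in the tight case $|A|=|B|=6$ in which $A$ and $B$ are rainbow-coloured with \emph{the same} six colours, so that $s$ and $t$ have a single common available colour yet are adjacent.

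I expect this tight coincidence to be the main obstacle. To resolve it I would perturb the colouring of $G'$ before extending — re-choosing the homomorphism $G'\to QR_7$ using the vertex- and arc-transitivity of $QR_7$ (Lemma~\ref{lem:QR7}), or performing a local Kempe-type recolouring on one vertex of $A$ — so as to break the equality of the colour sets on $A$ and $B$ while keeping the colouring proper on $G^2-\{s,t\}$. Verifying that such a perturbation is always available, alongside the checking of the Theorem~\ref{thm:subCubic} hypotheses for the components of $G'$, is where the substantive work of the proof lies.
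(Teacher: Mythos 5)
Your proposal has two genuine gaps, both of which you flag yourself, and neither is repairable along the lines you suggest. First, applying Theorem~\ref{thm:subCubic} to the components of $G'=G-\{s,t\}$ requires that no component contain a degree-$3$ source adjacent to a degree-$3$ sink. But $G$ may contain several source--sink arcs: if $s't'\in A(G)$ is another such pair with $s',t'$ not adjacent to $s$ or $t$, then $s'$ and $t'$ keep degree $3$ in $G'$ and remain an adjacent source--sink pair, so Theorem~\ref{thm:subCubic} simply does not apply to their component. Inserting an auxiliary arc among $\{a,b,c,d\}$ cannot help, since the offending pair may be far from the deleted vertices. Second, the extension step can genuinely get stuck: when $|c(A)|=6$ you need to have chosen $c(t)\in c(A)\setminus c(B)$, which is impossible exactly when $c(A)=c(B)$ with both of size $6$. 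Your proposed repairs do not obviously work: recolouring via an automorphism of $QR_7$ (vertex- or arc-transitivity) applies one and the same permutation of colours to $A$ and to $B$, hence preserves the equality $c(A)=c(B)$; and the Kempe-type recolouring is not specified, let alone shown to terminate in a good colouring. So the proof is incomplete precisely at the points you describe as ``the substantive work.''

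The paper sidesteps both difficulties using an observation you made but did not exploit: since $s$ is a source and $t$ is a sink, the arc $st$ lies on no $2$-dipath, so its only effect on a $2$-dipath colouring is to force $c(s)\neq c(t)$. Accordingly, the paper subdivides every source--sink arc $s_it_i$ (replacing it by $s_ix_i$, $x_it_i$) instead of deleting vertices. Directed distances at most $2$ between original vertices are preserved (the pair $s_i,t_i$ is now joined by a $2$-dipath through $x_i$), so $\chi_{2d}(G)\le\chi_{2d}(G')$; and $G'$ is connected, properly subcubic (each $x_i$ has degree $2$), and has no degree-$3$ source adjacent to a degree-$3$ sink, because every such adjacency has been subdivided away --- all of them, not just one. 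Theorem~\ref{thm:subCubic} then applies to all of $G'$ at once, giving $G'\to QR_7$ and $\chi_{2d}(G)\le\chi_{2d}(G')\le\chi_o(G')\le 7$, with no component analysis and no extension step at all.
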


\begin{proof}
	Let $G$ be an orientation of a connected cubic graph. 
	Let $s_1,s_2,\dots s_\ell$ and $t_1,t_2,\dots, t_\ell$ respectively be source and sink vertices so that $s_it_i \in A(G)$ for all $1 \leq i \leq \ell$.
	(Note that $s_1,s_2, \dots, s_\ell$ may not all be distinct vertices. Similarly $t_1,t_2,\dots, t_\ell$  may not all be distinct vertices.)
	Form $G^\prime$ from $G$ by first deleting each arc $s_it_i$ and then adding vertex $x_i$ so that $s_ix_i, x_it_i \in A(G^\prime)$.
	Notice that for any pair $u,v \in V(G)$, if $u$ and $v$ are at directed distance at most $2$ in $G$, then $u$ and $v$ are at directed distance at most $2$ in $G^\prime$.
	Therefore $\chi_{2d}(G) \leq \chi_{2d}(G^\prime)$.
	The oriented graph $G^\prime$ is a connected properly subcubic oriented graph with no degree $3$ source adjacent to a degree $3$ sink. 
	And so by Theorem \ref{thm:subCubic}, it follows $G^\prime \to QR_7$.
	Thus $\chi_o(G^\prime) \leq 7$ and so $\chi_{2d}(G) \leq 7$.
\end{proof}

\begin{lemma}\label{lem:adjst}
If $G$ is an orientation of a connected cubic graph and $G$ with no source vertex adjacent to sink vertex, then $\chi_{2d}(G) \leq 7$.
\end{lemma}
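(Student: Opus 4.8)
The plan is to prove the equivalent statement $\chi(G^2)\le 7$, using $\chi_{2d}(G)=\chi(G^2)$. I would begin by applying Lemma~\ref{lem:deg6}: if $G^2$ is $7$-regular then Lemma~\ref{lem:notComplete} finishes the proof at once, so the real work is the case in which $G^2$ has a vertex $v$ with $\deg_{G^2}(v)\le 6$. The idea is to delete $v$, repair $G$ into a connected \emph{properly} subcubic oriented graph $G^\ast$ satisfying the hypotheses of Theorem~\ref{thm:subCubic}, colour $G^\ast$ via a homomorphism to $QR_7$, and then extend that colouring to $v$ using the free colour guaranteed by $\deg_{G^2}(v)\le 6$.

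If $v$ is a source or a sink the repair is painless: no $2$-dipath is centred at $v$, so deleting $v$ destroys no constraint between the remaining vertices, and $G-v$ is already properly subcubic. Otherwise, after reversing all arcs if necessary --- an operation that preserves both $G^2$ and the no-source-adjacent-to-sink hypothesis, since it only interchanges sources and sinks --- I may assume $v$ has out-degree $2$, with in-neighbour $a$ and out-neighbours $b,c$. Here deleting $v$ does lose the two constraints coming from the $2$-dipaths $a\to v\to b$ and $a\to v\to c$, namely the edges $\{a,b\}$ and $\{a,c\}$ of $G^2$. I would restore one of them by setting $G^\ast=(G-v)+(a\to b)$; then $c$ has degree $2$, so $G^\ast$ is properly subcubic, and a short check shows that $a$ and $b$ keep their in/out-degrees while $c$ drops to degree $2$, so $G^\ast$ inherits the property of having no degree-$3$ source adjacent to a degree-$3$ sink (adding $a\to c$ instead, or using a different local patch, in the corner case where adding $a\to b$ would create such a pair). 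Theorem~\ref{thm:subCubic} then gives $G^\ast\to QR_7$ and hence a proper $7$-colouring of $(G^\ast)^2$.

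It remains to transfer this colouring to $G$. Every edge of $G^2$ between two vertices of $V(G)\setminus\{v\}$ survives in $(G^\ast)^2$ except possibly $\{a,c\}$, and $\{a,b\}$ is restored by the new arc; so once $a$ and $c$ receive distinct colours the colouring is proper on $G^2-v$, and the degree bound $\deg_{G^2}(v)\le 6$ leaves a colour free for $v$. The crux of the whole argument is thus forcing $a\neq c$ in colour, and I expect this to be the main obstacle. I would exploit the slack at the two low-degree vertices: $c$ has degree $2$ in $G^\ast$, hence few neighbours in $(G^\ast)^2$, so it can typically be recoloured to avoid both its neighbours and the colour of $a$, or one can use the flexibility of homomorphisms to $QR_7$ recorded in Lemma~\ref{lem:QR7}; in the symmetric obstruction the roles of $b$ and $c$ are interchanged. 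Two subsidiary points would be absorbed into the same analysis: if deleting $v$ disconnects $c$ from $a,b$, the components of $G^\ast$ are coloured separately and aligned using the vertex-transitivity of $QR_7$; and if any repair step threatens to create a source adjacent to a sink, that instance is instead routed through Lemma~\ref{lem:noadjst}.
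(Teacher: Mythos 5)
Your reduction to colouring $G^2$ and your dispatch of the $7$-regular case via Lemma~\ref{lem:notComplete} match the paper's opening moves, but the core of your argument --- delete one vertex $v$ of $G^2$-degree at most $6$, repair, apply Theorem~\ref{thm:subCubic}, extend --- has a gap exactly at the point you yourself flag as the crux, and it does not close. When $v$ has in-neighbour $a$ and out-neighbours $b,c$, the vertex $v$ mediates two edges of $G^2$, namely $\{a,b\}$ and $\{a,c\}$; since $a$ already has two other incident arcs in $G-v$, at most one of these constraints can be encoded as a new arc without pushing $a$ to degree $4$, so the other is necessarily left to luck. Theorem~\ref{thm:subCubic} is a black box: it guarantees \emph{some} homomorphism $\phi:G^\ast\to QR_7$ but gives no control whatsoever over whether $\phi(a)\neq\phi(c)$. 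Your two fallbacks do not repair this. Greedy recolouring of $c$ can fail: in $G^2-v$ the vertex $c$ has up to $7$ neighbours (its two $G$-neighbours besides $v$, the vertex $a$, and up to four vertices at directed distance $2$), so all $7$ colours may be forbidden. And Lemma~\ref{lem:QR7}(2) only provides two ways to extend a colouring across a new centre vertex of a $2$-dipath; it says nothing about steering the homomorphism that Theorem~\ref{thm:subCubic} produces on $G^\ast$. Note that the paper faces this same difficulty in Lemma~\ref{lem:triFree} and escapes only by spending an eighth colour on the problematic vertex --- which is precisely what you cannot afford here.

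There are also secondary holes in the repair step: if $a$ is a source and $b$ a sink of $G$ (they need not be adjacent in $G$, so this is consistent with the hypothesis), adding $a\to b$ creates a degree-$3$ source adjacent to a degree-$3$ sink, and the alternative patch $a\to c$ can fail simultaneously (this one is fixable by orienting the new arc as $b\to a$, since only the colour inequality is needed); worse, if $v$ is a cut vertex whose deletion leaves $a$ and $b$ in a component not containing $c$, then adding $a\to b$ makes that component $3$-regular, and Theorem~\ref{thm:subCubic} no longer applies to it at all --- vertex-transitivity of $QR_7$ does not help, and you would need an induction you never set up. The paper avoids all of this by arguing globally rather than locally: assuming $\chi(G^2)>7$, it takes a maximum induced subgraph $G^2[C]$ of minimum degree $7$ (which exists unless $G^2$ is $6$-degenerate), uses Theorem~\ref{thm:subCubic} applied to $G[C]$ together with the centre vertices in $H=V(G)\setminus C$ of dipaths joining $C$ to show this auxiliary subgraph must equal all of $G$, and then reaches a contradiction by counting: the edges of $G^2[C]$ number at most $\frac{7|C|}{2}-\frac{5|H|}{2}<\frac{7|C|}{2}$, contradicting minimum degree $7$. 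Some such global mechanism is what your proposal is missing; the local patch-and-extend strategy cannot enforce the one constraint it discards.
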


\begin{proof}
	We proceed by contradiction.
	Consider, $G$, an orientation of a connected cubic graph with no source vertex adjacent to sink vertex so that $\chi_{2d}(G) > 7$.
	Clearly $G \not\to QR_7$, as otherwise $\chi_o(G) \leq 7$.
	However we note that by Theorem \ref{thm:subCubic}, every proper subgraph of $G$ admits a homomorphism to $QR_7$.

	If $G^2$ is $7$-regular, then the result follows by Lemma \ref{lem:notComplete}.	
	If $G^2$ is $6$-degenerate, then $\chi(G^2) \leq 7$.
	And so we may assume that $G^2$ is not $6$-degenerate nor $7$-regular.
	Therefore $G^2$ contains an induced subgraph with minimum degree $7$.
	Let $C$ be the vertices of a maximum such induced subgraph of $G^2$.
	Let $H = V(G) \setminus C$.
	By Lemma \ref{lem:deg6}, $G^2$ has a vertex of degree at most $6$ and so $C \neq V(G)$ and  $H \neq \emptyset$.

	Since $C$ is maximum, there is an ordering of the elements of $H: x_1,x_2,x_3, \dots, x_\ell$ so that for all $1 \leq i \leq \ell$ vertex $x_i$ has degree at most $6$ in the subgraph of $G^2$ induced by the vertices of $C$ together with $x_1,x_2,\dots, x_{i-1}$.
	We see then that if $G^2[C]$ admits a  $7$-colouring, such a colouring can be extended to a $7$-colouring of $G^2$.
	As this would be a contradiction, we may assume $\chi(G^2[C]) > 7$.
	
	Edges in $G^2[C]$ arise from arcs in $G[C]$ and from $2$-dipaths in $G$ whose ends are in $C$.
	For these latter edges, it is possible the the centre vertex of such a $2$-dipath is not contained in $C$.
	Let $B_C = \{w\in H | w \mbox{ is between a pair of vertices in $C$ }\}$.
	That is, $B_C$ is the set of vertices of $G$ that are not in $C$ and are the centre vertex of a $2$-dipath in$G$ whose ends are in $C$.  
	Consider the subgraph of $G$, $G_C$, formed from $G[C]$ by adding the vertices of $B_C$ and all arcs with exactly one endpoint in $C$ and one end point in $B_C$.
	Since $U(G)$ is cubic, every proper subgraph $U(G)$ has components that are properly subcubic.
	Thus, if $G_C$ is a proper subgraph of $G$ (i.e, if $G_C \neq G$), then every component of $G_C$ is a connected properly subcubic oriented graph with no degree $3$ source vertex adjacent to a degree $3$ sink vertex.
	By Theorem \ref{thm:subCubic} there exists a homomorphism $\phi: G_C \to QR_7$.
	Restricting $\phi$ to the vertices of $C$ yields an $7$-colouring of $G^2[C]$.
	This is a contradiction as $\chi(G^2[C]) > 7$.
	Therefore $G_C = G$.
	
	Since $G_C = G$ it follows $B_C = H$ and $G[H]$ is an independent set.
	We proceed to bound $e_C$, the number of edges in $G^2[C]$.
	Recall that edges in $G^2[C]$ arise from arcs in $G[C]$ and from induced $2$-dipaths in $G$, $xyz$, so that $x,z\in C$.

	The oriented graph $G$ is cubic and so has $\frac{3(|C|+|H|)}{2}$ arcs.
	Of these, exactly $3|H|$ arcs have exactly one end in $H$.
	Therefore $G^2[C]$ has at most $\frac{3(|C|+|H|)}{2}$ - $3|H|$ edges that arise from arcs in $G[C]$.
	
	Since $V(G) = C \cup H$, the set of induced $2$-dipaths $xyz \in G$ with $x,z\in C$ can be partitioned into those for which $y \in H$ and those for which $y \in C$.

	Each vertex of $H$ is the centre vertex of at most two induced  $2$-dipaths in $G$.
	Therefore there are at most $2|H|$ induced $2$-dipaths  $xyz$ in $G$ with $x,z\in C$ so that $y \in H$.

	Since each vertex of $C$ has degree at least $7$ in $G^2[C]$ and each vertex of $G$ has at most six vertices at distance exactly $2$ in $U(G)$, we observe  $G[C] $ has no isolated vertices.
	As each vertex of $C$ has degree at most $3$ and at least $1$ in $G[C]$,  each vertex $y \in C$ is the centre vertex of at most $deg_{G[C]}-1\geq 0$ induced $2$-dipaths in $G[C]$.
	Therefore there are at most \[\sum\limits_{y \in C} deg_{G[C]}(y)-1 = \left(\sum\limits_{y \in C} deg_{G[C]}(y)\right) - |C| \]  induced $2$-dipaths in $G[C]$.
	To find the sum of the degrees of the vertices in $G[C]$, we recall $G$ is cubic and so the sum of the degrees in $G$ is $3(|C| + |H|)$.
	Discarding a vertex $H$ from this count decreases this sum by exactly $6$; each vertex in $H$ is incident with $3$ arcs and each of these arcs has an endpoint in $C$. 
	Therefore $\sum\limits_{y \in C} deg_{G[C]}(y)  = 3(|C| + |H|) -6|H|$.
	And so there are at most 
	\[3(|C| + |H|) -6|H| - |C| = 2|C| - 3|H|\] induced $2$-dipaths  in $G[C]$.
	As  such there are at most $2|C| - 3|H|$  induced $2$-dipaths  $xyz$ in $G$ with $x,z\in C$ so that $y \in C$.
	
	Therefore \[e_C \leq  \frac{3(|C|+|H|)}{2} - 3|H| + 2|H| +2|C| - 3|H| = \frac{7|C|}{2} - \frac{5|H|}{2}.\]
	Since $H$ is non-empty, we conclude $e_C < \frac{7|C|}{2}$.
	On the other hand, we recall $G^2[C]$ has minimum degree $7$. 
	And so $e_C \geq \frac{7|C|}{2}$, a contradiction.

%
%
 
\end{proof}

\begin{theorem}\label{thm:main2}
	If $G$ is an orientation of a cubic graph, then $\chi_{2d}(G) \leq 7$.
\end{theorem}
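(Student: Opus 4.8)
The plan is to reduce the statement to the connected case and then combine Lemmas \ref{lem:noadjst} and \ref{lem:adjst}, which between them cover every orientation of a connected cubic graph. The first step is to observe that the $2$-dipath chromatic number decomposes over connected components. Every constraint imposed by a $2$-dipath colouring involves either an arc $uv$ or a $2$-dipath $uvw$; in both cases all of the vertices involved lie in a common connected component of $G$. Consequently a $2$-dipath colouring of $G$ is precisely a choice of $2$-dipath colouring on each connected component independently, so that $\chi_{2d}(G) = \max_i \chi_{2d}(G_i)$, where the maximum ranges over the connected components $G_1, G_2, \dots$ of $G$. Since every vertex of $G$ has degree $3$ and restricting to a component does not change a vertex's degree, each $G_i$ is itself an orientation of a connected cubic graph.

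It therefore suffices to bound $\chi_{2d}$ for an arbitrary orientation $H$ of a connected cubic graph. I would split into two exhaustive cases according to whether $H$ contains a source vertex adjacent to a sink vertex. If it does, then Lemma \ref{lem:noadjst} gives $\chi_{2d}(H) \leq 7$; if it does not, then Lemma \ref{lem:adjst} gives $\chi_{2d}(H) \leq 7$. As these two cases are mutually exclusive and exhaustive, we conclude $\chi_{2d}(H) \leq 7$ in every case, whence $\chi_{2d}(G) = \max_i \chi_{2d}(G_i) \leq 7$.

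The bulk of the difficulty has already been discharged in the preceding lemmas, so no genuine obstacle remains at this final stage. The only point requiring care is the componentwise reduction, where one must confirm both that no $2$-dipath can straddle two components and that each component inherits the cubic property; once this is verified, the theorem follows immediately from the dichotomy supplied by Lemmas \ref{lem:noadjst} and \ref{lem:adjst}.
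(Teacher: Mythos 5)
Your proposal is correct and follows the paper's proof exactly: the paper likewise reduces to the connected case and then invokes Lemmas \ref{lem:noadjst} and \ref{lem:adjst}, which together cover the two mutually exclusive cases. The only difference is that you spell out the componentwise reduction ($\chi_{2d}(G) = \max_i \chi_{2d}(G_i)$ and each component remains cubic), which the paper leaves implicit with ``it suffices to assume $G$ is connected.''
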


\begin{proof}
	It suffices to assume $G$ is connected. The result follows directly from  Lemmas \ref{lem:noadjst} and \ref{lem:adjst}.
\end{proof}

\section{Discussion}\label{sec:Discussion}

In an early investigation into the oriented chromatic number of orientations of connected cubic graphs Sopena \cite{SO97} conjectured $\chi_o(\mathcal{F}^C_3) = 7$.
This conjecture has been verified for all orientations of connected cubic graphs with fewer than $20$ vertices \cite{P06}.
Here we have shown $\chi_o(\mathcal{F}^C_3) \in \{7,8\}$ and  $\chi_{2d}(\mathcal{F}_3) = 7$.
Our results imply exactly one of the following must be true:
\begin{enumerate}
	\item $\chi_o(\mathcal{F}^C_3) = 8$ or
	\item  $\chi_o(\mathcal{F}^C_3) = \chi_{2d}(\mathcal{F}_3) = 7$.
\end{enumerate} 

In other words, either Sopena's conjecture is false or any orientation of a cubic graph with $2$-dipath chromatic number $7$ also has oriented chromatic number $7$.
Notably, a statement analogous to (2) is true for orientations of $2$-regular graphs.
To wit, $\chi_o(\mathcal{F}^C_2) = \chi_{2d}(\mathcal{F}_2) = 5$.
(In fact $\chi_o(\mathcal{F}_2) = \chi_{2d}(\mathcal{F}_2) = 5$ \cite{S16}.)
It remains to be seen if this is an artefact of the simple structure of $2$-regular graphs or due to some deeper connection between oriented colourings and $2$-dipath colourings of orientations of bounded degree graphs.
In either case, the study of oriented graphs that have oriented chromatic number equal to $2$-dipath chromatic number presents a new line of attack for this long-standing open problem.

In \cite{MAYO12} the authors define, for each $k > 1$, an oriented graph $H_k$ with the property that $G \to H_k$ if and only if $\chi_{2d}(G) \leq k$.
An important next step in the study of homomorphisms of orientations of connected cubic graphs is the a study of the oriented chromatic number of various subgraphs of $H_7$.

Our results for the oriented chromatic number are limited to orientations of cubic graphs that are connected. 
Recent work by Dybizba\'nskia, Ochem, Pinlou and Szepietowskia \cite{O19} extend the work in \cite{D19} and show that all oriented cubic graphs admit an oriented $9$-colouring.
This result follows from construction of a $9$-vertex universal target, based on $QR_7$, for the family of orientations of cubic graphs.
Using similar  methods, these authors provide new upper bounds for the oriented chromatic number of other families of orientations of bounded degree  graphs.
It remains to be seen if Theorem \ref{thm:main} can be extended in a similar manner.


\bibliographystyle{plain}
\bibliography{references}

\end{document}